\documentclass[aip,amsmath,amssymb]{revtex4-1}

\usepackage[utf8]{inputenc}
\usepackage[T1]{fontenc}
\usepackage{mathptmx}
\usepackage{amsfonts}
\usepackage{amsthm}
\usepackage{graphicx}
\usepackage{bm}
\usepackage{xcolor}
\usepackage{longtable}
\usepackage{float}
\usepackage{tikz}
\usepackage[shortlabels]{enumitem}
\usepackage{hyperref}
\usepackage{etoolbox}

\graphicspath{{./}}

\newtheorem{theorem}{Theorem}

\newtheorem{corollary}{Corollary}
\newtheorem{definition}{Definition}

\newtheorem{remark}{Remark}

\def\LL{\mathcal{L}}

\makeatletter
\def\@email#1#2{%
 \endgroup
 \patchcmd{\titleblock@produce}
  {\frontmatter@RRAPformat}
  {\frontmatter@RRAPformat{\produce@RRAP{*#1\href{mailto:#2}{#2}}}\frontmatter@RRAPformat}
  {}{}
}%
\makeatother

\begin{document}

\title{Coupling of diffusion and reaction in a thin cylindrical tube: Methodological drawbacks of the Fick--Jacobs approach}

\author{Sergey~D.~Traytak}
\email{sergtray@mail.ru}
\affiliation{Semenov Federal Research Center for Chemical Physics, Russian Academy of Sciences, 4 Kosygina St., Moscow 119991, Russian Federation}
% ORCID: 0000-0002-3551-8796

\author{Timofey~V.~Fedoseev}
\email{fedoseev.tv@phystech.edu}
\affiliation{Semenov Federal Research Center for Chemical Physics, Russian Academy of Sciences, 4 Kosygina St., Moscow 119991, Russian Federation}
\affiliation{Moscow Institute of Physics and Technology, 9 Institutskyi Lane, Dolgoprudnyi 141701, Moscow Region, Russian Federation}
% ORCID: 0009-0004-1506-4186

\date{\today}

\begin{abstract}
We investigate a problem, that describes coupling between diffusion and reaction inside a thin circular cylindrical tube.  The asymptotic solution of the posed  problem is derived by means of the boundary functions method.  We perform comparison of this asymptotic solution against corresponding exact solution, which revealed serious methodological drawbacks of known Fick-Jacobs reduction approach. The results obtained may be used to study a wide range of reaction-diffusion problems, when the Fick-Jacobs method cannot be applied.
\end{abstract}

\maketitle

%%%%%%%%%%%%%%%%%%%%%%%%%%%%%%%%%%%%%%%%%%%%%%%%%%%%%%%%
\section{Introduction}\label{sec:Intro}
%%%%%%%%%%%%%%%%%%%%%%%%%%%%%%%%%%%%%%%%%%%%%%%%%%%%%%%

Over the last forty years since Zwanzig seminal papers~\cite{Zwanzig83,Zwanzig92},
the solution of the diffusion equation inside a tube of varying cross section has seen a rapid growth of interest and activity. A large number of analytical and numerical results for this and appropriate problems was obtained up to now. However we do not intend to present here any survey of relevant publications in this field. Interested readers can find rather extensive historical review on the idea of the so-called {\it Fick-Jacobs 1D reduction} and current state of the issue in an exhaustive book~\cite{Dagdug24} (see also recent papers~\cite{Chacón-Acosta23,Krasowska25,Burada26,Nayak26,Dagdug26}). Besides, it should be highlighted the clear and comprehensive development of the relevant diffusion problem with the aid of kinetic theory performed in Ref.~\cite{Brey24}.

Here we briefly dwell on necessity and reasons of present investigation. First of all highlight that in our previous article~\cite{Traytak14} by means of matched asymptotic expansions approach we obtained the uniformly valid leading-term approximation to solution of the 3D diffusion problem in a slender tube of varying cross section.
Particularly it was shown out there that the use of the singular perturbations theory had directly come up with the   Fick-Jacobs 1D reduction: it had used a method from related fields of applied mathematics and mechanics that no one had previously thought of applying to this type of physical problem. The task has been solved rigorously and besides the mathematical theory presented in details we provided there a number of physical examples. So it seemed that at least the questions about the methodological approach were closed. Nevertheless, the above rigorous asymptotic method appeared to be rather sophisticated with technically complicated implementation procedure and apparently, therefore, it has not received due attention.

Thus, one of the main incentive for writing this paper was an attempt to consider here the relevant so-called {\it toy model}. By this we mean a greatly simplified diffusion model, which helps us to elucidate the singular perturbation methods understanding the principal techniques of approaching more complex problem. The choice of the specific toy model was primarily dictated by its simplicity for clear asymptotic analysis and then existing of the exact analytical solution.
That is why among different singular perturbation methods, their versions and implementations, we will utilize the so-called {\it method of boundary functions}, which is the generally recognized as the simplest one~\cite{Butuzov87}.
To make the task analytically tractable and clear asymptotic solution procedure we will investigate a simple-minded physical model considering very small Brownian particles, diffusing inside a finite circular cylindrical tube with source of particles prescribed on the wall and one end towards perfectly absorbing end.

Summarising we conclude that our goal is twofold: (1) show how simple and naturally it may be solved by known asymptotic method and (2) point out to serious difficulties of commonly used the Fick–Jacobs method even in this simple toy problem.

The paper is arranged as follows.
In next Sec.~\ref{sec:Statement} we rigorously formulate the boundary value problem for the steady-state diffusion equation to calculate the local concentration and reaction rate within the scope of the arbitrary circular cylinder model.  Section~\ref{sec:Exact} contains exact analytical solution of the posed problem obtained with the help of standard method of separation of variables. Here we also found the corresponding asymptotic expansion of the exact solution for the case of thin cylinder. Formulation of the problem under consideration as a singular perturbed problem is given in Sec.~\ref{sec:singular}. We study the spectrum of the unperturbed operator by means of the Vishik-Lyusternik theory and proved that above singular perturbed problem is not on the spectrum. This result allows us to apply very powerful boundary functions technique to derive explicit asymptotic solution in Sec.~\ref{sec:Asymptotic}. Sec.~\ref{sec:1Dreduction} contains some discussion concerning essential difficulties in applying the method of the Fick-Jacobs 1D reduction to describe diffusion-controlled reaction inside a thin circular cylinder. Finally, conclusions together with possible future extension of the present work are summarized in Sec.~\ref{sec:Conclusion}.

%%%%%%%%%%%%%%%%%%%%%%%%%%%%%%%%%%%%%%%%%%%%%%%%%%%%%%%%
\section{Statement of the problem} \label{sec:Statement}
%%%%%%%%%%%%%%%%%%%%%%%%%%%%%%%%%%%%%%%%%%%%%%%%%%%%%%%

%%%%%%%%%%%%%%%%%%%%%%%%%%%%%%%%%%%%%%%%%%%%%%%%%%%%%%%%%%%%%%%%%%%%%%%%%%%
\subsection{Physical background}  \label{subsec:Backappendix}
%%%%%%%%%%%%%%%%%%%%%%%%%%%%%%%%%%%%%%%%%%%%%%%%%%%%%%%%%%%%%%%%%%%%%%%%%%%

Let us study the free steady-state Brownian motion of very small entities (point-like particles) ($B$-particles) into a 3D bounded region with an absorbing site located on its boundary.
For the sake of simplicity we have limited our treatment to reaction-diffusion processes in homogeneous and isotropic media, 
paying attention to the {\it restricted geometry effects}.

In case of diffusion-controlled reactions we develop both exact and asymptotic formalisms, based on Smoluchowski-type approach within this formalism we obtain explicit expressions for the local concentration and reaction rate.

It has been realized that very often irreversible diffusion-controlled reactions inside a bounded domain $\Omega \subset \mathbb{R}^3$  occur upon its chemically heterogeneous boundary $\partial \Omega$ described by the simplest reaction scheme:~\cite{Rice85}
\begin{align}
	S+B\;{\rightarrow}\; S + Products\,.\label{DCR1}
\end{align}
For instance active site $S$ may be treated as some catalytic substrate consists immobile catalytic site located on the an active part of the domain $\Omega$ boundary $\partial\Omega$~\cite{Aris95}.

In this paper we will consider the diffusion problem for the case of reactions with the full diffusion control posed in a cylindrical tube of circular cross-section. As regards adopted notations and definitions, here, we basically follow known paper~\cite{Dagdug25}, which deals with the trapping of a particle diffusing in a cylindrical cavity by circular absorbing spots of arbitrary radii located in the centers of the cavity bases.

%%%%%%%%%%%%%%%%%%%%%%%%%%%%%%%%%%%%%%%%%%%%%%%%%%%%%%%%%%%%%%%%%%%%%%%%%%%
\subsection{Geometrical description}  \label{subsec:Geometry}
%%%%%%%%%%%%%%%%%%%%%%%%%%%%%%%%%%%%%%%%%%%%%%%%%%%%%%%%%%%%%%%%%%%%%%%%%%%

Assume that point-like $B$-particles diffuse inside a 3D circular cylindrical tube $\Omega $ (configuration manifold~\cite{Traytak24}) of radius $R$ and length $L$ ($R<L$) see Fig.~\ref{Fig1}. It should be emphasized that cylindrical geometry is very attractive for theoretical analysis due to its axial symmetry and consequent simplicity of mathematical description.

Taking the $z$-coordinate axis the same, it is expedient to transform the
Cartesian coordinate system $\left\{O;x,y,z\right\}$  to
circular cylindrical coordinates $\left\{O;r,\phi,z\right\}$ connected by
\begin{equation}
	x = r \cos  \phi  \,,\qquad
	y = r \sin \phi  \,,\qquad
	z = z \,.
	\label{CylC1}
\end{equation}
Clearly, for all azimuthal angles \(0<\phi<2\pi\), one has
\begin{align}
	\Omega = \left\{0<r=\sqrt{x^2+y^2}<R\right\}\times\left\{0<z<L\right\} \,.
	\label{CylC2}
\end{align}
Denote by $\Sigma_{z}$:  $ 0 \le z \le L$ the cylindrical tube cross-section at any given point $z$ of the symmetry axis and its area by $ \vert \Sigma_{z} \vert = \pi R^2$.

Corresponding cylindrical tube boundary $\partial\Omega$ (see Fig.~\ref{Fig1}) comprises: (i) the lateral cylindrical boundary 
\begin{align}
	\partial\Omega_w:=\left\{ r = R \right\}\times \left\{ 0 < z < L \right\}
	\label{WallB}
\end{align} 
called the wall, and (ii) two parts of the boundary 
\begin{eqnarray}
	\Sigma_0:=\left\{ z = 0 \right\}\times \left\{ 0 < r < R \right\}\,, \label{LeftB} \\ 
	\Sigma_L:=\left\{ z = L \right\}\times \left\{ 0 < r < R \right\}  \,, \label{rightB} 
\end{eqnarray}
which are termed the left and right cylindrical ends. 

Moreover, for the problem at issue it is convenient to decompose the cylindrical tube boundary as follows:
\begin{align}
	\partial\Omega: = \partial\Omega_s \cup \Sigma_L \,, \quad \partial\Omega_s:= \Sigma_0 \cup \partial\Omega_w\,.
	\label{Bounds}
\end{align}
Here $\partial\Omega_s$ and $\Sigma_L$ denote the part of the boundary with the source of $B$-particles and absorbing part, respectively. Thus the domain $\Omega$ is bounded by the $ 3 $-connected boundary $\partial\Omega$ with connected components $ \left\{ \Sigma_0, \partial\Omega_w, \Sigma_L\right\} $.

%%====================================================================================================================
\begin{figure}[htbp]
\centering
\includegraphics[width=0.5\textwidth]{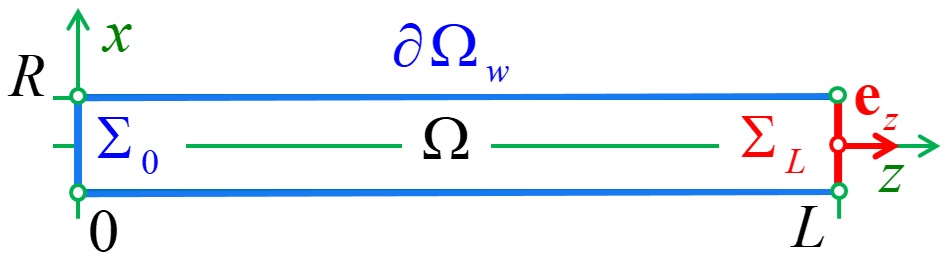}
\caption{The main cylindrical section along $Oy$ axis for a 3D cylindrical tube with radius $R$ and length $L$. We depicted the absorbing $\Sigma_L$ (\ref{rightB}) (red) and the source $\partial\Omega_s$ (\ref{Bounds}) (blue) parts of the boundary. The Cartesian coordinates are shown in green.}
\label{Fig1}
\end{figure} 
%%====================================================================================================================

%%%%%%%%%%%%%%%%%%%%%%%%%%%%%%%%%%%%%%%%%%%%%%%%%%%%%%%%%%%%%%%%%%%%%%%%%%%%%%%%
\subsection{Governing equations and boundary conditions}\label{subsec:Basic}
%%%%%%%%%%%%%%%%%%%%%%%%%%%%%%%%%%%%%%%%%%%%%%%%%%%%%%%%%%%%%%%%%%%%%%%%%%%%%%%%

To avoid unnecessary complications let us impose the axisymmetric boundary conditions on $ \partial\Omega$ making problem independent of the azimuthal angle $\phi$. Besides, in order to describe the above diffusion–reaction model mathematically it is expedient to introduce a {\it normalized dimensionless local concentration} given throughout the entire domain $\Omega$~\cite{Rice85,Traytak24}
\begin{align}
	u \left(r, z\right) := n_B \left(r, z\right)/c_B \in \left(0,\, 1 \right) \,, \label{SurPr1} 
\end{align}
where $ n_B \left(r, z\right)$ is the {\it local concentration} (number density) of $B$-particles and  its constant value $ c_B $ prescribed at the boundary $ \partial \Omega_s $. Further on, we will call all transformations of function (\ref{SurPr1}) just  local concentration if there is no confusion.

Thus within the above domain $\Omega$ we will treat the following steady-state Fick's system: continuity equation and the classical constitutive relation (the first Fick’s law of diffusion) defined as
\begin{eqnarray}
	\bm \nabla \cdot \mathbf{j} =0\,, \label{Lap0a} \\ 
	\mathbf{j} = -D{\bm \nabla} u  \,. \label{Lap1a} 
\end{eqnarray}
Hereafter
$ \mathbf{j} \left(r, z\right) $ is the local diffusive flux of  at a current point $ \left(r, z\right) \in \Omega $ and $D$ is the translational diffusion coefficient of $B$-particles, and $ \bm{\nabla} $ denotes the gradient operator in $\mathbb{R}^3$.

The Fick diffusive system Eqs. (\ref{Lap0a}), (\ref{Lap1a}) immediately leads to the equation governing local concentration 
\begin{align}
	-D{\bm \nabla}^2 u =0 \quad \mbox{in} \quad \Omega \,.	\label{Lap1}
\end{align}

Using the cylindrical coordinate system the steady-state diffusion equation (\ref{Lap1}) for
the local concentration $ u\left(r,z\right)$ it is convenient to put in the divergent form
\begin{align}
\frac{1}{r}
\frac{\partial}{\partial r}
\left(
r\frac{\partial u}{\partial r}
\right)
+
\frac{\partial^2 u}{\partial z^2}
=0 
\quad
\text{in} \quad \Omega \,.
\label{CcDE1}
\end{align}

This equation is subject to the corresponding Dirichlet boundary conditions on
$\partial\Omega $:
%Along with Eq.~(\ref{CcDE1}) we impose the following  boundary conditions 
\begin{eqnarray}
\left. u \right|_{\partial\Omega_s} = 1 \,, \quad\left. u \right|_{\Sigma_L} = 0 \,.
\label{Bc2}
\end{eqnarray}
\begin{remark}
We point out that the homogeneous Dirichlet boundary condition on $\Sigma_L$ is also known as Smoluchowski's boundary condition in applications~\cite{Rice85}.
\end{remark}
Thus, we deal with the internal Dirichlet boundary value problem for the Laplace equation (\ref{CcDE1}), (\ref{Bc2}) imposed in the cylindrical domain $\Omega$.

%%%%%%%%%%%%%%%%%%%%%%%%%%%%%%%%%%%%%%%%%%%%%%%%%%%%%%%%%%%%%%%%%%%%%%%%%%
\subsection{Dimensionless form of the problem} \label{sec:Dimensionless}
%%%%%%%%%%%%%%%%%%%%%%%%%%%%%%%%%%%%%%%%%%%%%%%%%%%%%%%%%%%%%%%%%%%%%%%%%%

Next, we cast the problem (\ref{CcDE1}), (\ref{Bc2}) in a dimensionless form using $R$ and $L$ as a reference length scales
\begin{eqnarray}
\rho = \frac{r}{R} \,,\quad \xi = \frac{z}{L} \,, \label{DimVar1}\\
\Omega \to \Omega_{\epsilon}:= \left\{0<\rho<1\right\}\times\left\{0<\xi<1\right\}\,.\label{DimOmega}
\end{eqnarray}
Then the dimensionless diffusion problem for the local concentration $u\left( \rho,\xi; \epsilon \right)$ becomes
\begin{eqnarray}
\frac{1}{\rho}
\frac{\partial}{\partial \rho}
\left( \rho\frac{\partial u}{\partial \rho}
\right)	+
	\epsilon^2\frac{\partial^2 u}{\partial \xi^2} = 0 \quad \text{in} \quad \Omega_{\epsilon} \,, 
	\label{BVP1} \\
	\left. u \right|_{\rho=1} = 1 \,,
	\label{BVP2}\\
	\left. u \right|_{\xi=0} = 1 \,, \quad \left. u \right|_{\xi=1} = 0 \,.  
	\label{BVP3}	
\end{eqnarray}
In Eq. (\ref{BVP1}) we introduced an important dimensionless {\it thickness parameter}
\begin{equation}
	0< \epsilon = {R}/{L}  \,,
	\label{RL1}
\end{equation}
which is determined by the geometry of the tube domain $\Omega_{\epsilon}$. At this stage of the study we do not impose any restrictions on the magnitude of the thickness parameter $\epsilon$ (\ref{RL1}).
\begin{remark}
Note that, to simplify notations, below we mostly omit the thickness parameter $\epsilon$ in function $u\left( \rho,\xi; \epsilon \right)$.
\end{remark} 

It is clear that on the $z$ axis one should also take into consideration conditions of regularity and axial symmetry of the solution~\cite{Traytak14}, respectively
\begin{eqnarray}
	\left. u \right|_{\rho=0} < \infty \,,
	\qquad
	\left. \frac{\partial u}{\partial \rho} \right|_{\rho=0} =  0  \,.
	\label{AxisRegExact1}
\end{eqnarray}

%%%%%%%%%%%%%%%%%%%%%%%%%%%%%%%%%%%%%%%%%%%%%%%%%%%%%%%%%%%%%%
\subsection{The trapping rate}\label{sec:rate}
%%%%%%%%%%%%%%%%%%%%%%%%%%%%%%%%%%%%%%%%%%%%%%%%%%%%%%%%%%%%%%%

According to the Smoluchowski theory~\cite{Rice85} in order to describe the kinetics of the irreversible diffusion-controlled reactions one should estimate the {\it trapping reaction rate} defined by
\begin{align}
	k \left( \epsilon\right):=\Phi_L \left( \epsilon\right)/c_B\,, \label{TotFl}
\end{align}
where $\Phi_L \left( \epsilon\right)$ stands for the total flux of $B$-particles on the right end of the cylinder $\Sigma_L$. 

Once the local concentration of $B$-particles $ u \left(r, z\right)$ is in hand, the  trapping reaction rate can be calculated straightforwardly by the general formula 
\begin{align}
	k \left( \epsilon\right) =\int\limits_{\Sigma_L}\left. \left( {\mathbf e}_z \cdot \mathbf{j} \right) \right| _{\Sigma_L}dS \,.  \label{Lap2}
\end{align}
Hereinafter $ {\mathbf e}_z \left(r, z\right) $ being the normal unit vector pointing outward of $ \Omega $ at its spatial point of the right cylindrical end $ \Sigma_L $ (see Fig.~\ref{Fig1}) and $d{S}$ is differential element of $\Sigma_L$ area. 

Then, recasting formula (\ref{Lap2}) with respect to $ u \left(r, z\right) $, one can easily obtain the desired trapping rate by the integral %$\partial\Omega_{\xi} $ (\ref{Bound2})
\begin{align}
	k \left( \epsilon\right) = - \pi RD\int_{0}^{1} \left. \frac{\partial u}{\partial \xi}\right| _{\xi=1} \rho d\rho \,.	\label{dm1n}
\end{align}
Moreover, it is convenient to introduce the dimensionless trapping rate into the absorbing end $k^{\ast}$ normalized $k$ (\ref{dm1n}) by $\pi RD$, i.e.
\begin{align}
k^{\ast} \left( \epsilon\right):=k \left( \epsilon\right)/\pi RD \,.	\label{dm1na}
\end{align}
The normalized trapping rate of $B$-particles on a given absorbing surface (\ref{dm1na}) is of primary importance for the theory of diffusion-controlled reactions.

%%%%%%%%%%%%%%%%%%%%%%%%%%%%%%%%%%%%%%%%%%%%%%%%%%%%%%%%
\section{Exact analytical solution}\label{sec:Exact}
%%%%%%%%%%%%%%%%%%%%%%%%%%%%%%%%%%%%%%%%%%%%%%%%%%%%%%%

In this Section we present exact analytical solution to the above posed diffusion problem (\ref{BVP1})-(\ref{BVP3}) in cases of arbitrary and small thickness parameter (\ref{RL1}).

%%%%%%%%%%%%%%%%%%%%%%%%%%%%%%%%%%%%%%%%%%%%%%%%%%%%%%%%%%%%%%
\subsection{The general case}\label{subsec:general}
%%%%%%%%%%%%%%%%%%%%%%%%%%%%%%%%%%%%%%%%%%%%%%%%%%%%%%%%%%%%%%%

Although the local concentration $u \left( \rho,\xi \right)$ (\ref{SurPr1}) may be used for further treatment,
below, we are going to utilize the so-called {\it complementary normalized local concentration}, which seems to be more
appropriate to describe the restricted geometry effects
\begin{align}
	v \left( \rho,\xi \right) = 1-u \left( \rho,\xi \right) \,.
	\label{ComplFunc1}
\end{align}
Then it is obvious that the function $v \left( \rho,\xi \right)$ (\ref{ComplFunc1}) obeys the boundary value problem:
\begin{eqnarray}
	\frac{1}{\rho}
	\frac{\partial}{\partial \rho}
	\left( \rho\frac{\partial v}{\partial \rho}
	\right)	+
	\epsilon^2\frac{\partial^2 v}{\partial \xi^2} = 0 \quad \text{in} \quad \Omega_{\epsilon} \,, 
	\label{BVPv1} \\
	\left. v \right|_{\rho=1} = 0 \,,
	\label{ComplBCwall1}\\
	\left. v \right|_{\xi=0} = 0 \,, \quad \left. v \right|_{\xi=1} = 1 \,.
	\label{ComplBCright1}
\end{eqnarray}
Following the standard method of separation of variables~\cite{Carslaw59}
we can arrive at the general form of solution to Eq. (\ref{BVP1})
\begin{align}
v \left( \rho,\xi \right) =
\sum_{m=1}^{\infty}
C_m
J_0 \left( \lambda_m\rho \right)
\sinh\left(\lambda_m\xi/\epsilon \right)\,, \label{GenSol1}
\end{align}
where $C_m$ are the unknown real coefficients to be determined from the boundary conditions (\ref{ComplBCwall1}), (\ref{ComplBCright1}),
and $J_\nu\left(\lambda_m\rho\right)$  is the Bessel function of the first kind of $\nu$-th order (see Appendix).
Hereafter the real numbers $\lambda _{k}$ are the roots of the transcendental equation (\ref{Hs4}).

Using boundary condition (\ref{ComplBCright1}) at $\xi =1$ and orthogonality condition (\ref{PolL3}) we readily obtain the explicit expression for the exact solution to the problem (\ref{BVPv1})-(\ref{ComplBCright1})
\begin{align}
	v \left( \rho,\xi \right)
	=2\sum_{m=1}^{\infty}
	\frac{J_0 \left( \lambda_m\rho \right)}
	{\lambda_m J_1 \left( \lambda_m \right)}
	\frac{\sinh\left( \lambda_m\xi/\epsilon \right)}
	{\sinh\left( \lambda_m/\epsilon \right)} \,.
	\label{wFinalExact1}
\end{align}
With the help of Eq. (\ref{wFinalExact1}) and formula (\ref{dm1n}) we get the exact expression for the normalized trapping rate 
\begin{align}
	k^{\ast} \left( \epsilon\right) = 2\sum_{m=1}^{\infty}
	\frac{1}
	{\epsilon\lambda_m }\coth\left(\lambda_m/\epsilon \right) \,.	\label{dm1n0}
\end{align}

%%%%%%%%%%%%%%%%%%%%%%%%%%%%%%%%%%%%%%%%%%%%%%%%%%%%%%%%%%%%%%%%%%%%%%%%
\subsection{Small thickness parameter expansions}\label{sec:Asymptotic}
%%%%%%%%%%%%%%%%%%%%%%%%%%%%%%%%%%%%%%%%%%%%%%%%%%%%%%%%%%%%%%%%%%%%%%%%

Let us find here the corresponding  expansion of the local concentration for small thickness parameter $\epsilon$ directly from the exact analytical solution $v \left( \rho,\eta \right)$ (\ref{wFinalExact1}).

First note that the evident relation holds true
\begin{align}
\frac{\sinh\left( \lambda_m\xi/\epsilon \right)}
{\sinh\left( \lambda_m/\epsilon \right)}
\sim
\exp\left(
-{\lambda_m }\eta
\right) \quad \mbox{as} \quad \epsilon \to 0 \,,
\label{v2Asym1}
\end{align}
where for convenience sake we introduced variable
\begin{align}
	\eta := \frac{1-\xi}{\epsilon} \,.
	\label{EtaExact1}
\end{align}
It will further be shown that auxiliary variable $\eta$ is the so-called {\it stretched variable} inside the right diffusion boundary layer (see Subsection~\ref{subsec:RightLayer}).

Using relation (\ref{v2Asym1}) in the exact solution (\ref{wFinalExact1}), one directly obtains the desired asymptotic expansion  
\begin{align}
v \left( \rho,\eta \right)
\sim
2\sum_{m=1}^{\infty}
\frac{J_0 \left( \lambda_m\rho \right)}
{\lambda_m J_1 \left( \lambda_m \right)}
e^{-\lambda_m\eta} \quad \mbox{as} \quad \epsilon \to 0 \,.
\label{ExactLeadingEtaW1}
\end{align}

Clearly asymptotic expression for the trapping rate $k^{\ast} \left( \epsilon\right)$ as $\epsilon \to 0$ coincides with that in the general case $\epsilon > 0$ (\ref{dm1n0}).

%%%%%%%%%%%%%%%%%%%%%%%%%%%%%%%%%%%%%%%%%%%%%%%%%%%%%%%%%%%%%%%%%%%%%%%%%%
\section{Formulation as a singular perturbed problem}\label{sec:singular}
%%%%%%%%%%%%%%%%%%%%%%%%%%%%%%%%%%%%%%%%%%%%%%%%%%%%%%%%%%%%%%%%%%%%%%%%%

%%%%%%%%%%%%%%%%%%%%%%%%%%%%%%%%%%%%%%%%%%%%%%%%%%%%%%%%%%%%%%
\subsection{Governing equations}\label{subsec:State}
%%%%%%%%%%%%%%%%%%%%%%%%%%%%%%%%%%%%%%%%%%%%%%%%%%%%%%%%%%%%%%%

Before proceeding further recall the following 
\begin{definition}\label{defin1}
	Domains are said to be thin if at least one its characteristic lengths is much smaller than the others~\cite{Dzhavadov65}.
\end{definition}

In this section we consider the case of {\it thin cylinder} when the radius of the cylinder $R$ is much smaller than its length $L$, i.e. 
\begin{align}
	0< \epsilon = \frac{R}{L} \ll 1 \,.
	\label{epsDef1}
\end{align}
The asymptotic analysis below is based on the small geometric parameter $\epsilon$ (\ref{epsDef1}) associated with the thin cylindrical domain $\Omega_{\epsilon}$. Besides, note that definition of small parameter $\epsilon $ (\ref{epsDef1}) do not coincide with that given in Ref.~\cite{Traytak14}.

Now we proceed to the asymptotic analysis of the above boundary-value problem as $\epsilon \to 0$.
Then the boundary value problem (\ref{BVP1}), (\ref{AxisRegExact1}), (\ref{ComplBCwall1}), (\ref{ComplBCright1}) takes the form
\begin{eqnarray}
	\frac{1}{\rho}
	\frac{\partial}{\partial \rho}
	\left( \rho\frac{\partial v_\epsilon}{\partial \rho}
	\right)	+	\epsilon^2\frac{\partial^2 v_\epsilon}{\partial \xi^2} = 0 \quad \text{in} \quad \Omega_{\epsilon} \,, 
	\label{ABVP1} \\
	\left. v_\epsilon \right|_{\rho=1} = 0 \,,
	\label{ABVP2}\\
	\left. v_\epsilon \right|_{\xi=0} = 0 \,, \quad \left. v_\epsilon \right|_{\xi=1} = 1 \,,
	\label{ABVP3}\\
	\left. v_\epsilon \right|_{\rho=0} < \infty \,,
	\label{ABVP4}\\
	\left. \frac{\partial v_\epsilon}{\partial \rho} \right|_{\rho=0} = 0 \,.
	\label{ABVP5}
\end{eqnarray}
\begin{remark}
Note that hereafter we designate the desired solution to the above problem when $\epsilon \ll 1$ by $v_\epsilon$ as it is common in the perturbation theory~\cite{Panasenko24}.
\end{remark}

Let us recast Eq.~(\ref{ABVP1}) in the operator form
\begin{align}
	\mathcal{L}_\epsilon v_\epsilon = 0 
	\quad	\text{in} \quad \Omega_{\epsilon}	\,.
	\label{OperForm1}
\end{align}
Here we define the elliptic operator $\mathcal{L}_\epsilon = \mathcal{L}_0 + \epsilon^2\mathcal{L}_1$ consisting of the unperturbed and perturbed operators, respectively:
\begin{eqnarray}	
\mathcal{L}_0
&:=&
-\frac{1}{\rho }\frac{\partial }{\partial \rho }\left(
\rho \frac{\partial  }{\partial \rho }\right) \,, \label{LimOper}\\
\mathcal{L}_1
&:=&
-\frac{\partial^2}{\partial \xi^2} \,.	\label{PerOper}
\end{eqnarray}	
 
One can see that the perturbation parameter in $\mathcal{L}_\epsilon$ has the second order and since in Eq.~(\ref{ABVP1}) the small parameter $\epsilon$ stands at the highest derivative with respect to $\xi$ the problem~(\ref{ABVP1})-(\ref{ABVP5}) is singularly perturbed~\cite{Cole96}. 

It is known that the asymptotic study of the singularly perturbed problem~(\ref{OperForm1}), (\ref{ABVP2})-(\ref{ABVP5}) drastically depends on the spectral properties of the unperturbed operator $\mathcal{L}_0$, therefore, we consider this point first.

\subsection{Spectrum of the unperturbed operator}

Let us show that the following result holds true.
\begin{theorem}\label{theorem1}
The unperturbed operator $\mathcal{L}_0$ of the singularly perturbed problem (\ref{ABVP1}), (\ref{ABVP5}) is not on the spectrum.
\end{theorem}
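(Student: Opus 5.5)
The claim to establish is that $\lambda = 0$ is not an eigenvalue of $\mathcal{L}_0$ (\ref{LimOper}). We consider $\mathcal{L}_0$ as an operator in $\rho$ alone, with $\xi$ entering only as a parameter. It acts on functions on $(0,1)$ that obey the wall condition (\ref{ABVP2}) and the axis conditions (\ref{ABVP4}), (\ref{ABVP5}). In the Vishik--Lyusternik sense, the degenerate problem $\mathcal{L}_0 w = f$ must then be uniquely solvable. Only in that case can the outer (regular) expansion be built by inverting $\mathcal{L}_0$ order by order, with boundary functions attached only near $\xi=0$ and $\xi=1$.

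The argument proceeds in three steps.

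First, we pose the homogeneous problem
\begin{align*}
\frac{1}{\rho}\frac{d}{d\rho}\left(\rho\frac{dw}{d\rho}\right)=0\,,\quad 0<\rho<1\,,\qquad w(1)=0\,,\qquad |w(0)|<\infty\,,\quad w'(0)=0\,.
\end{align*}

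Second, we integrate directly. The general solution is $w = A + B\ln\rho$. Regularity at $\rho=0$ forces $B=0$, which also gives $w'(0)=0$. The wall condition then forces $A=0$, so $\ker\mathcal{L}_0=\{0\}$.

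Third, we check that $0$ does not belong to the spectrum at all, not merely to the point spectrum. Here $\mathcal{L}_0$ is the radial part of $-\nabla^2$ on the unit disk with the Dirichlet condition. In the weighted space $L^2((0,1);\rho\,d\rho)$ it is self-adjoint with compact resolvent, and its eigenvalues are $\lambda_m^2>0$, where $\lambda_m$ are the roots of $J_0(\lambda)=0$ from (\ref{Hs4}). The spectrum is therefore purely discrete and bounded below by $\lambda_1^2\approx 5.78>0$. One could also note the positivity $(\mathcal{L}_0 w,w)=\int_0^1 \rho\,(w')^2\,d\rho>0$ for $w\neq 0$.

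For completeness, we exhibit the bounded inverse through its Green's function:
\begin{align*}
G(\rho,s)=-s\ln\max(\rho,s)\,.
\end{align*}
This yields a solution $w=\int_0^1 G(\rho,s)f(s)\,ds$ for any admissible $f$. It is exactly what is needed to solve the regular-series problems $\mathcal{L}_0 \bar v_k = -\mathcal{L}_1 \bar v_{k-2}$.

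The main obstacle is interpretive rather than computational. The statement "$\mathcal{L}_0$ is not on the spectrum" has to be pinned down as "$0\notin\sigma(\mathcal{L}_0)$ with the lateral and axial conditions built in", and the singular endpoint $\rho=0$ needs care. Regularity (\ref{ABVP4}), and not only (\ref{ABVP5}), is what excludes the $\ln\rho$ branch. Once this is settled, the rest follows from the Bessel eigenfunction expansion already used in Sec.~\ref{sec:Exact}.
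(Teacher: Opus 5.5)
Your Steps 1--2 are exactly the paper's proof: the general solution $A+B\ln\rho$ of $\mathcal{L}_0 w=0$, regularity at $\rho=0$ killing $B$, the wall condition killing $A$, then the Vishik--Lyusternik criterion. Your Step 3 (self-adjointness with spectrum $\{\lambda_m^2\}$, $\lambda_1^2\approx 5.78$) and the Green's function $G(\rho,s)=-s\ln\max(\rho,s)$ are correct additions the paper does not make, and they upgrade ``trivial kernel'' to $0\notin\sigma(\mathcal{L}_0)$ with a bounded inverse. One small correction to your closing remark: (\ref{ABVP5}) alone would also exclude the $\ln\rho$ branch, since $(\ln\rho)'=1/\rho\not\to 0$, so regularity is sufficient but not the only condition that does the job.
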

\begin{proof}
Let us suppose that the eigenfunction $v_\lambda$ satisfies the Sturm-Liouville problem
\begin{eqnarray}
		\qquad \LL_0 v_\lambda	=	\lambda\,v_\lambda \left( \rho \right) \,,
		\quad \rho \in \left(0,\, 1 \right) \,, 
		\label{SLP1}\\
		\left. v_\lambda \right|_{\rho=1} = 0 \,,	\quad
		\left. v_\lambda \right|_{\rho=0} < \infty \,, \quad
		\label{SLP2}
\end{eqnarray}
where $\lambda \in {\mathbb R}$ is the spectral parameter. 

Consider problem~(\ref{SLP1}), (\ref{SLP2}) at $\lambda = 0$. Then Eq.~(\ref{SLP1}) takes the simple form 
$$
\LL_0 v_0 = 0
$$
with the general solution
\begin{align}
v_0 \left( \rho \right) = A_1 + A_2 \ln \rho \,,\label{SpecZeroSol1}
\end{align}
where $A_1$ and $A_2$ are some arbitrary real numbers.
The use of the regularity condition (\ref{SLP2}) on the cylinder axis $\rho = 0$ straightforwardly leads to $A_2 = 0$. In turn it follows from the boundary condition at the wall (\ref{SLP2}) that $A_1 = 0$.
	
Thus, eigenvalue problem (\ref{SLP1}), (\ref{SLP2}) has only the trivial solution $v_0 \left( \rho \right) \equiv 0$ associated with the trivial eigenvalue $\lambda = 0$. Therefore, according to the Vishik-Lyusternik main theorem criterion (see p. 21 of Ref.~\cite{Vishik60}) the singularly perturbed problem~(\ref{ABVP1})-(\ref{ABVP5}) is not on the spectrum.
\end{proof}

Theorem \ref{theorem1} gives an important
\begin{corollary}\label{Corollary1}
Since the singularly perturbed problem~(\ref{ABVP1})-(\ref{ABVP5}) appeared to be not on the spectrum one can derive the explicit asymptotic solution $v_a\left(\rho,\xi; \epsilon \right) \sim v_\epsilon \left(\rho,\xi; \epsilon \right)$ as $\epsilon\to 0$.
\end{corollary}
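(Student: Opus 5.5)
The corollary will follow from Theorem~\ref{theorem1} together with the Vishik--Lyusternik framework. I will prove it by constructing an asymptotic solution explicitly and then bounding the remainder with a maximum principle.

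\emph{Outer solution.} Insert $v_\epsilon = \sum_k \epsilon^{2k} \bar v_k(\rho,\xi)$ into Eq.~(\ref{OperForm1}). This gives the chain $\mathcal{L}_0 \bar v_0 = 0$ and $\mathcal{L}_0 \bar v_k = -\mathcal{L}_1 \bar v_{k-1}$. Each equation is posed on $\rho\in(0,1)$ with the wall condition (\ref{ABVP2}) and the axis conditions (\ref{ABVP4}), (\ref{ABVP5}), with $\xi$ entering only as a parameter. Since $\lambda=0$ is not an eigenvalue of (\ref{SLP1}), (\ref{SLP2}), each problem is uniquely solvable (Fredholm alternative). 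Hence $\bar v_0\equiv 0$ and inductively $\bar v_k\equiv 0$, so the regular part vanishes identically.

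\emph{Boundary layers.} The outer expansion fails the end conditions (\ref{ABVP3}) only at $\xi=1$; at $\xi=0$ the trivial outer solution already satisfies $v=0$. Near $\xi=1$ I introduce the stretched variable $\eta$ of (\ref{EtaExact1}). The operator then becomes $\mathcal{L}_0 - \partial_\eta^2$, with no residual $\epsilon$. The boundary function $\Pi v(\rho,\eta)$ must satisfy this equation on the half-strip, equal $1$ at $\eta=0$, vanish at $\rho=1$, and decay as $\eta\to\infty$. Expanding in the eigenfunctions $J_0(\lambda_m\rho)$ of (\ref{SLP1}), all of which have $\lambda_m^2>0$ by Theorem~\ref{theorem1}, gives $\Pi v = 2\sum_m J_0(\lambda_m\rho)e^{-\lambda_m\eta}/(\lambda_m J_1(\lambda_m))$. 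This is exactly (\ref{ExactLeadingEtaW1}).

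\emph{Remainder estimate.} Set $v_a := \Pi v$ and $w := v_\epsilon - v_a$. Then $\mathcal{L}_\epsilon w = 0$, and $w$ vanishes on the wall and at $\xi=1$. At $\xi=0$ it equals $-\Pi v(\rho,1/\epsilon)=O(e^{-\lambda_1/\epsilon})$. The maximum principle for the elliptic operator $\mathcal{L}_\epsilon$ then yields $\sup|w| = O(e^{-\lambda_1/\epsilon})$, which gives $v_a\sim v_\epsilon$.

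\emph{Main obstacle.} The delicate point is justifying the decay and convergence of the boundary-layer series near the corner $\rho=1,\ \eta=0$, where the data $1$ and $0$ are incompatible. There the eigenfunction series converges only in $L^2$ or away from the corner. I would first handle this by a weak maximum principle applied on the domains $\xi\le 1-\delta$, then let $\delta\to0$ using boundedness of $v_\epsilon$ and $v_a$ between $0$ and $1$.
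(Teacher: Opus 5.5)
Your construction of $v_a$ matches the paper's: outer term $\equiv 0$, a trivial left boundary function, and the right boundary function (\ref{RightInnerSolW}). You justify $v_a\sim v_\epsilon$ differently, though. The paper cites the Vishik--Lyusternik criterion and then checks by inspection that the composite expansion agrees with the small-$\epsilon$ form (\ref{ExactLeadingEtaW1}) of the exact series (\ref{wFinalExact1}). You instead show that every outer term vanishes and bound $w=v_\epsilon-v_a$ by the maximum principle, which gives a uniform $O(e^{-\lambda_1/\epsilon})$ error. Your route is self-contained and quantitative, and it does not use the exact solution, so it would still work in geometries with no separable solution. The paper's check is shorter but stands or falls with (\ref{wFinalExact1}). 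Your inner operator $\mathcal{L}_0-\partial_\eta^2$ is also the correct one; (\ref{RightInnerEqW}) as printed has the wrong sign in front of $\partial_\eta^2$.

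Two steps need repair.

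First, positivity of the eigenvalues $\lambda_m^2$ does not follow from Theorem~\ref{theorem1}, which only rules out $\lambda=0$. Decay of the boundary function comes from coercivity of $\mathcal{L}_0$ with the wall condition ($\int_0^1\rho\,|v'|^2\,d\rho>0$ for $v\not\equiv0$), or simply from the $\lambda_m$ being the positive zeros of $J_0$. A negative eigenvalue would produce non-decaying oscillatory modes even though the problem would still be ``not on the spectrum''.

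Second, your corner argument does not close as written. The maximum principle on $\{\xi\le 1-\delta\}$ gives $|w|\le\max(Ce^{-\lambda_1/\epsilon},\,\sup_\rho|w(\rho,1-\delta)|)$. Letting $\delta\to0$ then requires $w(\cdot,1-\delta)\to0$ uniformly up to $\rho=1$, i.e.\ continuity of $w$ at the corner, which is exactly what is in doubt. The bounds $0\le v_\epsilon,v_a\le1$ only give $|w|\le1$ there.

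A fix that works:
\begin{enumerate}
\item In the variables $(\rho\cos\phi,\rho\sin\phi,\xi/\epsilon)$, $-\mathcal{L}_\epsilon$ is the ordinary Laplacian, so treat $w$ as a bounded harmonic function in the 3D cylinder. This also disposes of the axis $\rho=0$, where the 2D problem carries no boundary data.
\item Let $\Gamma=\{\rho=1,\ \xi=1\}$ be the corner circle, and let $h$ be the single-layer potential of $\Gamma$. It is positive, harmonic off $\Gamma$, and tends to $+\infty$ logarithmically at $\Gamma$.
\item Apply the maximum principle to $\pm w-Ce^{-\lambda_1/\epsilon}-\tau h$ outside a thin tube around $\Gamma$. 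On the tube's boundary $\tau h\ge 1\ge\pm w$ once the tube is thin enough.
\item Shrink the tube, then let $\tau\to0$; this gives $\sup|w|\le Ce^{-\lambda_1/\epsilon}$.
\end{enumerate}
Equivalently, $\Gamma$ has zero capacity in $\mathbb{R}^3$, so the extended maximum principle applies.
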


Moreover, Theorem \ref{theorem1} evidently implies also
\begin{corollary}\label{Corollary2}
If the cylinder $\Omega_{\epsilon}$ is a highly elongated, i.e., $R \ll L$ 
local concentration $u_\epsilon$ does not depend on the coordinates:  
\begin{align}
u_\epsilon\left(\rho,\xi\right) \to  u_0\left(\rho,\xi\right) \equiv 1 \quad \mbox{as} \quad \epsilon\to 0 \,.\label{Cor2}
\end{align}
\end{corollary}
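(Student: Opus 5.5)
The plan is to obtain Corollary~\ref{Corollary2} in two stages. First I identify the limit formally through the reduced problem, using Theorem~\ref{theorem1}. Then I justify the convergence with the exact solution (\ref{wFinalExact1}). The convergence must be understood away from the absorbing end $\Sigma_L$, since a boundary layer of width $O(\epsilon)$ at $\xi=1$ is unavoidable: there $v_\epsilon=1$ while the limit is $0$.

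\emph{Step 1 (reduced problem).} Setting $\epsilon=0$ in (\ref{OperForm1}) gives $\mathcal{L}_0 v_0=0$ in $\Omega_\epsilon$, with $\xi$ entering only as a parameter. This equation is subject to the wall condition (\ref{ABVP2}) and the axis conditions (\ref{ABVP4}), (\ref{ABVP5}); the end conditions (\ref{ABVP3}) are dropped, as usual for the regular part of a singularly perturbed problem. For each fixed $\xi$ this is exactly problem (\ref{SLP1}), (\ref{SLP2}) with $\lambda=0$. The proof of Theorem~\ref{theorem1} then gives $v_0(\rho,\xi)=A_1(\xi)+A_2(\xi)\ln\rho$ with $A_2=A_1=0$. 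Hence $v_0\equiv 0$, i.e.\ $u_0=1-v_0\equiv 1$. This is the content of (\ref{Cor2}) at the level of the outer expansion.

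\emph{Step 2 (rigorous convergence).} I will use the exact representation (\ref{wFinalExact1}) and show that $\sup_{\rho\in[0,1],\,\xi\le 1-\delta}|v_\epsilon|\to 0$ as $\epsilon\to 0$ for every fixed $\delta\in(0,1)$. The argument needs three estimates.
\begin{itemize}
\item For $0\le\xi\le 1-\delta$ and $\lambda_m/\epsilon\ge 1$, one has $\sinh(\lambda_m\xi/\epsilon)/\sinh(\lambda_m/\epsilon)\le C e^{-\lambda_m\delta/\epsilon}$.
\item The Bessel bounds give $|J_0|\le 1$.
\item The asymptotics $\lambda_m\sim(m-\tfrac14)\pi$ and $|J_1(\lambda_m)|\sim\sqrt{2/(\pi\lambda_m)}$ give $1/(\lambda_m|J_1(\lambda_m)|)\le C\lambda_m^{-1/2}$ uniformly in $m$.
\end{itemize}
Together these yield $|v_\epsilon|\le C\sum_{m\ge1} m^{-1/2}e^{-c m\delta/\epsilon}$. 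This sum is $O(e^{-\lambda_1\delta/\epsilon})$, so it tends to zero exponentially fast. In addition, the maximum principle applied to (\ref{ABVP1})--(\ref{ABVP3}) gives $0\le v_\epsilon\le 1$ on the whole domain. Combining this with the uniform estimate on $\{\xi\le 1-\delta\}$ gives $v_\epsilon\to 0$ pointwise for $\xi<1$ and also in $L^p(\Omega_\epsilon)$ for $p<\infty$. The $L^p$ bound holds because the bad set has measure at most $\delta$, and $\delta$ is arbitrary.

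\emph{Main obstacle.} The delicate point is the nonuniformity near $\Sigma_L$, and especially near the corner $\rho=1,\ \xi=1$. There the boundary data are discontinuous, and the series (\ref{wFinalExact1}) converges only conditionally. So (\ref{Cor2}) cannot hold uniformly on $\overline{\Omega}_\epsilon$. I will state it as uniform convergence on compact subsets of $\{\xi<1\}$, which is what Step 2 delivers. The boundary-layer correction, expressed by (\ref{ExactLeadingEtaW1}) in the stretched variable $\eta$, is exactly the part missing from the limit near $\Sigma_L$.
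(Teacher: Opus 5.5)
Your proposal is correct. Step 1 is exactly the paper's argument. The paper says only that Theorem~\ref{theorem1} ``evidently implies'' the corollary: the trivial kernel of $\mathcal{L}_0$ under the wall and axis conditions forces the degenerate solution to vanish, and this is recomputed later as the outer term (\ref{WOuterSol}). The paper never shows that $u_\epsilon$ actually converges to this degenerate solution; that step is left implicit in the appeal to the Vishik--Lyusternik theory.

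Your Step 2 supplies that missing step as a self-contained proof, using the exact series (\ref{wFinalExact1}) together with the maximum principle. It gives an explicit exponential rate on $\{\xi\le 1-\delta\}$. A small fix: to get $\lambda_1$ in the exponent rather than a generic $c$, keep $\lambda_m$ in the exponent and use $\lambda_m-\lambda_1\ge c(m-1)$.

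Step 2 also pins down the sense in which (\ref{Cor2}) can hold. Read literally, the corollary is false: uniform convergence on the closed cylinder fails, and $u_\epsilon$ is not ``independent of the coordinates''. Indeed $u_\epsilon=0$ on $\Sigma_L$ for every $\epsilon$, and the paper's own right boundary function (\ref{RightInnerSolW}) is $O(1)$ in the layer $1-\xi=O(\epsilon)$. The correct statement is the one you give: uniform convergence on $\{\xi\le 1-\delta\}$, pointwise convergence on $\{\xi<1\}$, and convergence in $L^p$.

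The trade-off is that Step 2 depends on having the explicit Fourier--Bessel solution. The paper's appeal to the general singular-perturbation theory would, in principle, also cover geometries with no closed-form solution.
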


%%%%%%%%%%%%%%%%%%%%%%%%%%%%%%%%%%%%%%%%%%%%%%%%%%%%%%%%
\section{Asymptotic solution}\label{sec:Asymptotic}
%%%%%%%%%%%%%%%%%%%%%%%%%%%%%%%%%%%%%%%%%%%%%%%%%%%%%%%

Theorem \ref{theorem1} guarantees us that the asymptotic solution can be found in an explicit form. This asymptotic solution will be performed by the method of boundary functions~\cite{Butuzov87}.

%%%%%%%%%%%%%%%%%%%%%%%%%%%%%%%%%%%%%%%%%%%%%%%
\subsection{Asymptotic domain decomposition}
%%%%%%%%%%%%%%%%%%%%%%%%%%%%%%%%%%%%%%%%%%%%%%%

First let us carry out the appropriate asymptotic decomposition~\cite{Panasenko22} (as $\epsilon \to 0$) of the given space domain $\Omega_\epsilon$ into three parts (see Fig.~\ref{fig:layers}):  
\begin{align}
\Omega_{\epsilon}= \widehat{\Omega}_\epsilon^{(i)}\cup \Omega_{\epsilon}^{\left( o\right)} \cup \widetilde{\Omega}_\epsilon^{(i)} \,,\label{Dec0}
\end{align}
where we introduced {\it outer subdomain} and {\it inner initial boundary layer subdomains}, respectively
\begin{eqnarray}	
	\Omega_{\epsilon}^{\left( o\right)}=\Omega_{\epsilon}\backslash \widehat{\Omega}_\epsilon^{(i)}\cup \widetilde{\Omega}_\epsilon^{(i)}\,,\label{Decomp2}\\
	\widehat{\Omega}_\epsilon^{(i)}:= \left\{0< \xi <\mathcal{O}\left({\epsilon} \right) \right\}\,,\label{Decomp3}\\
	\widetilde{\Omega}_\epsilon^{(i)}:= \left\{1-\xi <\mathcal{O}\left({\epsilon} \right) \right\}\,.\label{Decomp4}
\end{eqnarray}
\begin{figure}[htbp]
\centering
\includegraphics[width=0.45\textwidth]{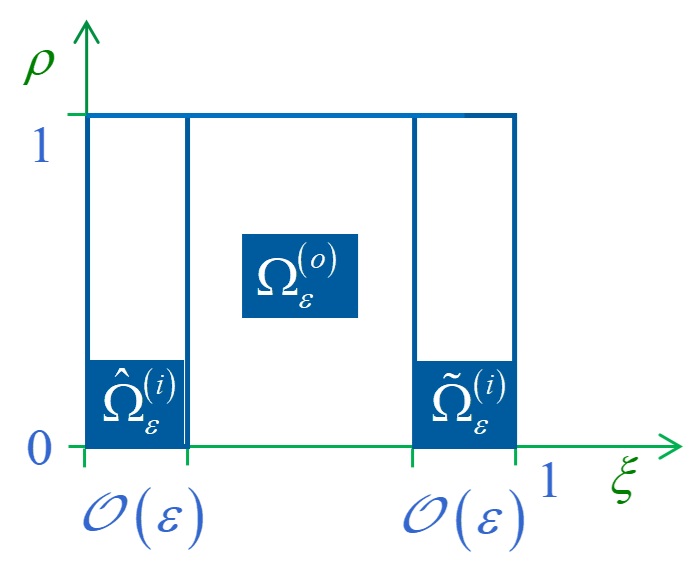}
\caption{Asymptotic decomposition of the domain $\Omega_\epsilon$ into the outer subdomain
		\(\Omega_\epsilon^{(o)}\), the left boundary layer
		\(\widehat{\Omega}_\epsilon^{(i)}\), and the right boundary layer
		\(\widetilde{\Omega}_\epsilon^{(i)}\), both of thickness \(O\left(\epsilon\right)\)
		near \(\xi=0\) and \(\xi=1\), respectively.}
\label{fig:layers}
\end{figure}
Hereafter we use designations: 
\begin{itemize}
\item $\Omega_\epsilon^{(o)}$ is the outer subdomain, away from the ends of the cylinder;
\item $\widehat{\Omega}_\epsilon^{(i)}$ is the spatial boundary layer of thickness
$\mathcal{O}\left(\epsilon\right)$ in the right neighborhood of point $\xi=0$;
\item $\widetilde{\Omega}_\epsilon^{(i)}$ is the spatial boundary layer of thickness
$\mathcal{O}\left(\epsilon\right)$ in the left neighborhood of point $\xi=1 $.
\end{itemize}

\begin{remark}
Note that in applied mathematics the outer subdomain is also known as the regular subdomain.
\end{remark}

It is intuitively clear from Eq. (\ref{ABVP1}) that transverse diffusion is strong ($ \mathcal{O} \left( 1 \right)$), whereas longitudinal diffusion is weak ($\mathcal{O} \left( \epsilon^2 \right)$). The concentration depends only weakly on $\xi$ in the regular subdomain and is mainly determined by the radial part.  In the regular subdomain, as $\epsilon\to 0 $, there exists a limiting solution independent of $\epsilon $. Longitudinal changes are significant only in boundary layers near the ends of the circular cylinder under consideration. Moreover in the boundary layers we should use corresponding  stretched variables, since a regular expansion cannot describe sharp longitudinal changes.

%%%%%%%%%%%%%%%%%%%%%%%%%%%%%%%%%%%%%%%%%%%%%%
\subsection{Outer asymptotic solution}
%%%%%%%%%%%%%%%%%%%%%%%%%%%%%%%%%%%%%%%%%%%%%%

We seek the outer solution in the regular perturbations form
\begin{align}
v_\epsilon^{(o)}\left(\rho,\xi; \epsilon \right)
=
v_0^{(o)}\left(\rho,\xi\right)
+
\epsilon^2 v_1^{(o)}\left(\rho,\xi\right)
+\cdots \,.
\label{WOuterExp}
\end{align}
Substitution of this expansion into problem~(\ref{ABVP1})-(\ref{ABVP5}) gives the boundary value problem with respect to the leading term in the outer expansion (\ref{WOuterExp}): 
\begin{eqnarray}
	\frac{1}{\rho}
	\frac{\partial}{\partial \rho}
	\left(
	\rho
	\frac{\partial v_0^{(o)}}{\partial \rho}
	\right)
	=0 \quad
	\text{in} \quad \Omega_\epsilon^{(o)} \,.
	\label{WOuterEq0} \\
	\left. v_0^{(o)} \right|_{\rho=1} = 0 \,, \quad
	\left. v_0^{(o)} \right|_{\rho=0} < \infty \,,
	\label{WOuterEq1}\\
	\left. \frac{\partial v_0^{(o)}}{\partial \rho} \right|_{\rho=0} = 0 \,,
	\label{WOuterEq2}\\
	\left. v_0^{(o)} \right|_{\xi=0} = 0 \,, \quad \left. v_0^{(o)} \right|_{\xi=1} = 1 \,.
	\label{WOuterEq3}
	\end{eqnarray}
The general solution to the problem (\ref{WOuterEq0})-(\ref{WOuterEq3}) reads
\begin{align}
v_0^{(o)}\left(\rho,\xi\right)
=
A_0\left(\xi\right)+B_0\left(\xi\right)\ln\rho \,,
\label{WOuterGen}
\end{align}
where $A_0\left(\xi\right)$ and $B_0\left(\xi\right)$ are arbitrary functions of $\xi$ to be determined from conditions (\ref{WOuterEq1}).
Using the regularity condition at $\rho=0$ one has $ B_0\left(\xi\right)\equiv0 $, while the wall condition at $\rho=1$  yields $ A_0\left(\xi\right)\equiv 0 $, whereas the axial symmetry condition (\ref{WOuterEq2}) is auto-satisfied. Thus, the leading term in the outer expansion (\ref{WOuterExp}) is identically equal to zero
\begin{align}
v_0^{(o)}\left(\rho,\xi\right) \equiv 0 \,.
\label{WOuterSol}
\end{align}

It is evident that function (\ref{WOuterSol}) obeys the left-end condition, but does not
satisfy the right-end condition in (\ref{WOuterEq3}). 

%%%%%%%%%%%%%%%%%%%%%%%%%%%%%%%%%%%%%%%%%%%%%%%%%%%%%%%%%%%%%%%%%%%
\subsection{Solution in the left boundary layer subdomain}
%%%%%%%%%%%%%%%%%%%%%%%%%%%%%%%%%%%%%%%%%%%%%%%%%%%%%%%%%%%%%%%%%%%

Introducing in the left boundary layer subdomain $\widehat{\Omega}_\epsilon^{(i)}$ the inner stretched variable
\begin{align}
\zeta = \frac{\xi}{\epsilon} 
\label{zetaDefW}
\end{align}
for the left boundary function $\widehat v^{(i)}\left(\rho,\zeta; \epsilon \right)$ we get the inner expansion
\begin{align}
	\widehat v^{(i)}\left(\rho,\zeta; \epsilon \right)
	=
	\widehat v_0^{(i)}\left(\rho,\zeta\right)
	+
	\epsilon^2 \widehat v_1^{(i)}\left(\rho,\zeta\right)
	+\cdots \,.
	\label{Left}
\end{align}
Clearly the leading term in the inner expansion (\ref{Left}) obeys the equation
\begin{align}
\frac{1}{\rho}
\frac{\partial}{\partial \rho}
\left(
\rho
\frac{\partial \widehat v_0^{(i)}}{\partial \rho}
\right)
+
\frac{\partial^2 \widehat v_0^{(i)}}{\partial \zeta^2}
=0 
\quad
\text{in} \quad \widehat{\Omega}_\epsilon^{(i)} \,.
\label{LeftInnerEqW}
\end{align}
Eq. (\ref{LeftInnerEqW}) should be solved under the boundary conditions 
\begin{align}
\left. \widehat v_0^{(i)} \right|_{\zeta=0} = 0 \,, \quad \left. \widehat v_0^{(i)} \right|_{\rho=1} = 0 \,. \label{LeftInnerBCW1}
\end{align}
The matching condition with the outer solution for fixed value $\xi>0$ and $\epsilon \to 0$ from left to right (see Fig.~\ref{fig:layers}) is
\begin{align}
\widehat v_0^{(i)}\left(\rho,\zeta\right)\to v_0^{(o)}\left(\rho,\xi\right) \equiv 0 
\quad \mbox{as} \quad \zeta\to+\infty \,.
\label{LeftMatchW}
\end{align}
Hence, one obtains
\begin{align}
\widehat v_0^{(i)}\left(\rho,\zeta\right) \equiv 0 \,.
\label{LeftInnerTrivialW}
\end{align}
Therefore, the nontrivial boundary function is located only in the left vicinity $\widetilde{\Omega}_\epsilon^{(i)}$ of the right end $\xi=1$.

%%%%%%%%%%%%%%%%%%%%%%%%%%%%%%%%%%%%%%%%%%%%%%%%%%%%%%%%%%%%%%%%%%%%%%%%%%%%%%%%%%%%%
\subsection{Solution in the right boundary layer subdomain}\label{subsec:RightLayer}
%%%%%%%%%%%%%%%%%%%%%%%%%%%%%%%%%%%%%%%%%%%%%%%%%%%%%%%%%%%%%%%%%%%%%%%%%%%%%%%%%%%%%

Completely similar to the previous case introducing  the stretched variable defined by Eq. (\ref{EtaExact1}) in the right boundary layer subdomain $\widetilde{\Omega}_\epsilon^{(i)}$ we can put down the expansion of the right boundary function (\ref{Right}) and problem (\ref{RightInnerEqW}), (\ref{RightInnerBCW2}) with respect to the leading term $\widetilde v_0^{(i)}$ in inner variables $\left(\rho,\eta\right)$:
\begin{eqnarray}
	\widetilde v^{(i)}\left(\rho, \eta; \epsilon \right)
	=
	\widetilde v_0^{(i)}\left(\rho,\eta \right)
	+
	\epsilon^2 \widetilde v_1^{(i)}\left(\rho,\eta\right)
	+\cdots \,.
	\label{Right}\\
	\frac{1}{\rho}
	\frac{\partial}{\partial \rho}
	\left(\rho
	\frac{\partial \widetilde v_0^{(i)}}{\partial \rho}
	\right)
	-
	\frac{\partial^2 \widetilde v_0^{(i)}}{\partial \eta^2}
	=0 
	\quad
	\text{in} \quad \widetilde{\Omega}_\epsilon^{(i)} \,.
	\label{RightInnerEqW}\\
	\left. \widetilde v_0^{(i)} \right|_{\rho=1} = 0 \,, \quad \left. \widetilde v_0^{(i)} \right|_{\eta=0} = 1 \,.
	\label{RightInnerBCW2}	
\end{eqnarray}

The matching condition with the outer solution $v_0^{(o)}$ for fixed value $1-\xi>0$ and $\epsilon \to 0$ from right to left (see Fig.~\ref{fig:layers}) reads
\begin{align}
\widetilde v_0^{(i)}\left(\rho,\eta\right) \to v_0^{(o)}\left(\rho,\xi\right) \equiv 0
\quad \mbox{as} \quad
\eta\to+\infty \,.
\label{RightMatchW}
\end{align}

The required leading term $\widetilde v_0^{(i)}$  may be found as the solution to the boundary value problem (\ref{RightInnerEqW}), (\ref{RightInnerBCW2}) under matching condition (\ref{RightMatchW}) by standard method of separation of variables~\cite{Carslaw59}
\begin{align}
\widetilde v_0^{(i)}\left(\rho,\eta\right)
= 2\sum_{m=1}^{\infty}
\frac{J_0 \left( \lambda_m\rho \right)}
{\lambda_m J_1 \left( \lambda_m \right)}
e^{-\lambda_m\eta} \,.
\label{RightInnerSolW}
\end{align}

\subsection{Composite asymptotic expansion}

According to the general theory of singular perturbations~\cite{Cole96} the leading {\it composite asymptotic solution} for $v_\epsilon$ uniformly valid to order $\mathcal{O}\left(1\right)$ in the whole domain $\Omega_{\epsilon}$ is 
\begin{eqnarray}
v_\epsilon\left(\rho,\xi\right) &\sim& v_0^{(o)}\left(\rho,\xi\right)+ \widehat v_0^{(i)}\left(\rho,\zeta\right) \nonumber \\
 &+&\widetilde v_0^{(i)}\left(\rho,\eta\right) \quad \mbox{as} \quad \epsilon \to 0\,.
\label{CompositeW}
\end{eqnarray}

Simple inspection, taking into account formulas (\ref{WOuterSol}), (\ref{LeftInnerTrivialW}) and (\ref{RightInnerSolW}), shows that this composite asymptotic solution coincides with the leading asymptotic formula obtained from the small thickness parameter expansion of the exact solution (\ref{ExactLeadingEtaW1}). Hence the corresponding asymptotic trapping rate $k^{\ast}_{\epsilon} \left( \epsilon\right)$ also leads to the exact result derived for the general case (\ref{dm1n0}).

%%%%%%%%%%%%%%%%%%%%%%%%%%%%%%%%%%%%%%%%%%%%%%%%%%%%%%%%%%%%%%%%%%%%%
\section{On the 1D reduction approach} \label{sec:1Dreduction}
%%%%%%%%%%%%%%%%%%%%%%%%%%%%%%%%%%%%%%%%%%%%%%%%%%%%%%%%%%%%%%%%%%%%%

Consider now the Fick-Jacobs 1D reduction approach mostly following its presentation given in the monograph~\cite{Dagdug24}. 

Let us begin by introducing 
\begin{definition}\label{defin2}
For the local concentration $v\left(r, z \right)$ within $\Sigma_{z} $ we define its averaging over the cross-section by 
\begin{align}
\langle v\rangle \left(z \right):= \frac{1}{\vert \Sigma_{z} \vert}\int_{0}^{R} 2 \pi r dr v\left(r, z \right) \label{AvEq3D}
\end{align}
\end{definition}
It is important to stress that the average local concentration (\ref{AvEq3D}) commutes with differentiation with respect to $z$.
\begin{remark}
Note that Definition~\ref{defin2} accords with the commonsense view of the term in physics~\cite{Dagdug14} and mathematics~\cite{Sanchez-Palencia80}.
\end{remark} 

Integrating both Eq. (\ref{ABVP1}) and boundary conditions (\ref{ABVP3}) with respect to variable $\rho$ over the interval $\left(0, 1\right)$ with the aid of the boundary condition (\ref{ABVP2}) we can easily arrive at 
\begin{eqnarray}
-\epsilon^2\frac{\partial ^{2}\langle v_{\epsilon}\rangle}{	\partial \xi ^{2}}  = 2 \left. \frac{\partial v_\epsilon}{\partial \rho}\right\vert _{\rho =1} \,,  \label{AvEq1}\\
\left. \langle v_{\epsilon}\rangle \right|_{\xi=0} = 0 \,, \quad \left. \langle v_{\epsilon}\rangle \right|_{\xi=1} = 1\,, \label{AvEq2}
\end{eqnarray}
where the average local concentration is
\begin{align}
\langle v_{\epsilon}\rangle \left(z \right) 
=2\int_{0}^{1}\rho d\rho v_{\epsilon}\left(\rho, \xi \right) \,.\label{AvEq3}
\end{align}

Thus, we obtained the boundary value problem (\ref{AvEq1}), (\ref{AvEq2}) for the ordinary differential equation with respect to $\langle v_{\epsilon}\rangle $.

In turn integration double of Eq. (\ref{AvEq1}) with respect to variable $\xi$ leads to
\begin{eqnarray}
-\frac{1}{2}\epsilon^2 {\langle v_{\epsilon}\rangle}&=&\int_{0}^{\xi} d\zeta_1\int_{0}^{\zeta_1} d\zeta \left. \frac{\partial v_\epsilon}{\partial \rho}\right\vert _{\rho =1} \nonumber \\
 &+& M_1\left(\rho \right)+ M_0\left(\rho \right)\xi\,,
\label{Zwan2}
\end{eqnarray}
where $M_0 \left(\rho \right)$ and $M_1 \left(\rho \right)$ are  arbitrary functions of $\rho $. Hence, using the boundary conditions (\ref{AvEq2}), we derive
$$
-\frac{1}{2}\epsilon^2 {\langle v_{\epsilon}\rangle} =
\int_{0}^{\xi} d\zeta_1\int_{0}^{\zeta_1} d\zeta \left. \frac{\partial v_\epsilon}{\partial \rho}\right\vert _{\rho =1} + M_0\xi\,,
$$
$$
-\frac{1}{2}\epsilon^2 \frac{\partial}{\partial \xi} {\langle v_{\epsilon}\rangle} =
\int_{0}^{\xi} d\zeta \left. \frac{\partial v_\epsilon}{\partial \rho}\right\vert _{\rho =1} + M_0\,,
$$
where constant $M_0$ is
$$
M_0= -\epsilon^2 \frac{1}{2} -
\int_{0}^{1} d\zeta_1\int_{0}^{\zeta_1} d\zeta \left. \frac{\partial v_\epsilon}{\partial \rho}\right\vert _{\rho =1}\,,
$$
Hence it is clear that for the trapping rate the reduction approach yields a compact formula
$$
k^{\ast} \left( \epsilon\right) = - \lim_{\xi \to 1}\frac{\partial}{\partial \xi} {\langle v_{\epsilon}\rangle}
$$
or in an expanded form
$$
k^{\ast} \left( \epsilon\right) = \frac{2}{\epsilon^2}\left[M_0+ \int_{0}^{1}  G_1 \left(\zeta \right)d\zeta\right]\,,
$$
where we used the notation
\begin{align}
\qquad \qquad \qquad G_1 \left(\xi \right):=\left. \frac{\partial v_\epsilon}{\partial \rho}\right\vert _{\rho =1}\,.
\label{FunG1}
\end{align}
This resembles the situation with so-called {\it boundary homogenization approach}, which we discussed in our recent paper~\cite{Traytak26}.

In Ref.~\cite{Dagdug24} it was noted that under the local equilibrium approximation, one can assume proportionality between the original concentration and the reduced density $\langle v_{\epsilon}\rangle \left(\xi \right)$, that is,  
\begin{align}
v_{\epsilon}\left(\rho, \xi \right) \approx \langle v_{\epsilon}\rangle \left(\xi \right) \upsilon \left(\rho \vert \xi \right)\,, \label{Zwan1}
\end{align}
where proportionality is provided by the conditional probability distribution $\upsilon \left(\rho \vert \xi \right)$.

From mathematical point of view (\ref{AvEq3}) is nothing but the averaging procedure, performing projection 2D $\mapsto $ 1D  
$$
v_{\epsilon}\left(\rho, \xi \right) \mapsto \langle v_{\epsilon}\rangle\left(\xi \right)\,,
$$
where the explicit form of the unknown function $\langle v_{\epsilon}\rangle \left(\xi \right)$ is to be determined. 

Thus, in the right-hand side of Eq. (\ref{AvEq1}) one has to attract a supplementary hypothesis to close the reduction with respect to the average function $v_{\epsilon}\left(\rho, \xi \right)$ (\ref{Zwan1}), which is equivalent to setting the function of one variable $G_1 \left(\xi \right)$  (\ref{FunG1}).

%%%%%%%%%%%%%%%%%%%%%%%%%%%%%%%%%%%%%%%%%%%%%%%%%%%%%%%%%
\section{Concluding remarks} \label{sec:Conclusion}
%%%%%%%%%%%%%%%%%%%%%%%%%%%%%%%%%%%%%%%%%%%%%%%%%%%%%%%%%

In this paper we considered the boundary value problem (\ref{BVP1})-(\ref{AxisRegExact1}), which describes coupling of diffusion and reaction inside a circular cylindrical tube. For the thin tube case we derived asymptotic solution of the posed reaction-diffusion problem with respect to the small thickness parameter (\ref{RL1}). This solution was found by means of the boundary functions method, which implies from the general singular perturbation theory. It is important to underline that our derivation of the asymptotic solution was implemented straightforwardly within the scope of the boundary functions method without any additional assumptions. Moreover, used here approach enables us to clarify the physical and mathematical meaning of the obtained results.

The choice of a specific reaction-diffusion problem was caused by its simplicity for asymptotic analysis and also existing of the exact analytical solution. It is significant that obtained asymptotic solution coincides with the exact expansion in terms of the small thickness parameter.

On the other hand the comparison of the boundary function method against know the Fick-Jacobs 1D reduction revealed serious methodological drawbacks of the latter approach. Really, at first glance it seems that the Fick-Jacobs 1D reduction greatly simplifies the task, but immediately a natural and nontrivial question arises concerning a concrete mathematical expression for the average local concentration (\ref{AvEq3}). Obviously it is impossible to resolve this problem without an extra hypothesis about the average local concentration or, equivalently, conditional probability distribution defined by Eq. (\ref{Zwan1}).

Future extension of the present work may include, e.g., generalizations which can take into account the partially absorbing boundary condition on the wall $\partial\Omega_w$ (see Fig.~\ref{Fig1}).

\section*{Acknowledgements}
We sincerely thank Professor L. Dagdug for drawing our attention to Ref.~\cite{Dagdug24}.

%%%%%%%%%%%%%%%%%%%%%%%%%%%%%%%%%%%%%%%%%%%%%%%%%%%%%%%%%
\section{Appendix}
%%%%%%%%%%%%%%%%%%%%%%%%%%%%%%%%%%%%%%%%%%%%%%%%%%%%%%%%%

For the sake of completeness and the readers’ convenience, we recall here a few useful classical mathematical definitions and facts on the Bessel functions.

It is well-known that in a neighborhood of point $\zeta=0$ Bessel's function of the first kind
of order $\nu \ge 0 $ may be defined by its Maclaurin series~\cite{Arfken01}
\begin{align}
	J_{\nu }\left( \zeta \right) =\sum\limits_{m=0}^{\infty }\frac{\left(
		-1\right) ^{m}}{\Gamma \left( m+\nu +1\right) m!}\left( \frac{\zeta }{2}%
	\right) ^{2m+\nu },  \label{Hs3b}
\end{align}
where $\Gamma \left( \beta \right) $ is the gamma function.

The following formulas hold true
\begin{eqnarray}
	\int_{0}^{1} J_0 \left( \lambda\zeta \right) \zeta d \zeta = 
	\frac{J_1\left(\lambda\right)}{\lambda}\,,	\label{IntegralFinal}\\
		\frac{d}{d\zeta}J_0(\zeta)=-J_1(\zeta)\,.
	\label{WatsonDerivativeIdentity}
\end{eqnarray}

Consider now the Bessel functions $J_0\left(\lambda_l\rho\right)$, which play an important role in applications. Hereafter the numbers $\lambda_l$ are the zeros of the transcendental equation
\begin{align}
	J_{0}\left( \lambda _{l}\right) =0\,.  \label{Hs4}
\end{align}
Hence $\left\{ \lambda _{l}\right\}_{l=1}^{\infty }$ is a monotone increasing sequence of positive reals such that, e.g. \cite{Abramowitz} 
$$
\lambda _{1}\approx 2.4048,\, \lambda _{2}\approx 5.5201,\,%
\lambda _{3}\approx 8.6537,...\, 
$$

Finally, let us recall the orthogonality condition
\begin{eqnarray}
	\int_{0}^{1}J_{0}\left( \lambda _{l}\rho\right)J_{0}\left( \lambda _{m}\rho\right)\rho d\rho = \Vert J_{0}\left(\lambda _{l}\rho\right)\Vert_{L_2}^2\delta_{lm}\,,\; \label{PolL3}	\\	
	\Vert J_{0}\left(\lambda_{l}\rho\right)\Vert_{L_2}=\frac{1}{\sqrt2}\frac{d}{d\rho}J_{0}\left( \lambda _{l}\rho\right)=-\frac{\lambda_{l}}{\sqrt2}J_{1}\left( \lambda _{l}\rho\right)\,,  \nonumber
\end{eqnarray}
where $\delta_{lm}$ is the Kronecker delta which is given by $\delta_{lm} =0$ if $l \ne m$ and  $\delta_{lm} =1$ if $l = m$.

% Create the reference section using BibTeX:
\bibliography{LiterF}

\end{document}